\newcommand{\noun}[1]{\textsc{#1}}
\numberwithin{equation}{section}
\numberwithin{figure}{section}
\numberwithin{table}{section}
\theoremstyle{plain}
\newtheorem{thm}{\protect\theoremname}[section]
\theoremstyle{plain}
\newtheorem{lem}[thm]{\protect\lemmaname}
\theoremstyle{remark}
\newtheorem{rem}[thm]{\protect\remarkname}
\theoremstyle{definition}
\newtheorem{defn}[thm]{\protect\definitionname}
\theoremstyle{plain}
\newtheorem{prop}[thm]{\protect\propositionname}
\newenvironment{proof}[1][\protect\proofname]{\par
\normalfont\topsep6\p@\@plus6\p@\relax
\trivlist
\itemindent\parindent
\item[\hskip\labelsep\scshape #1]\ignorespaces
}{%
\endtrivlist\@endpefalse
}
\providecommand{\proofname}{Proof}
\theoremstyle{plain}
\newtheorem{cor}[thm]{\protect\corollaryname}
\newtheorem{notation}[thm]{Notation}
\providecommand{\corollaryname}{Corollary}
\providecommand{\definitionname}{Definition}
\providecommand{\lemmaname}{Lemma}
\providecommand{\propositionname}{Proposition}
\providecommand{\remarkname}{Remark}
\providecommand{\theoremname}{Theorem}
\begin{document}

\title{potential examples for non-additivity of the minimal output entropy}

\maketitle
\author{M. Al Nuwairan}

\address{King Faisal university}

\email{msalnuwairan@kfu.edu.sa}

\makeatother

\makeatother

\begin{abstract}
In this paper, we study the minimal output entropy of EPOSIC channels.
We determine the cases where their minimal output entropy is zero,
and obtain some partial results on the fulfillment of their entanglement
breaking property. Our results show that these channels provide potential
examples for non-additivity of the minimal output entropy.
\end{abstract}

\section{introduction}

The carrier of the states (information) from one part to another in
quantum systems is known as a quantum channel. Ideally, a channel
carries a state form one system to another without losing information.
However, the existence of noise in all information processing systems
affects the channel's performance in any transmission of such information.
One of the important open questions is that of determining the capability
of a channel to transmit classical information, which is known as
the classical capacity of the channel. In their attempts to increase
the capacity of quantum channels, scientists studied whether or not,
running two channels in parallel will increase the total classical
capacity of two channels. Failing to do so, the capacity is called
additive. According to \cite[Prop 8.2]{key-7}, and P. Shor in \cite{key-12},
the additivity of another quantity known as the minimal output entropy
(MOE) of the channel implies the additivity of the classical capacity.
Much research effort was directed to prove the additivity of the minimal
output entropy. It has been proved for some special classes of quantum
channels such as channels with zero minimal output entropy, tensoring
the identity with any channel \cite{key-2}, and the entanglement
breaking channels \cite{key-11}. However, an outstanding paper in
2008 by Hastings \cite{key-4} disproved this conjecture. By giving
a randomized construction of channels that violates the additivity
of the minimal output entropy, he was able to show that there exists
an example of a channel $\Phi$ such that $S_{min}(\Phi\otimes\overline{\Phi})\neq S_{min}(\Phi)+S_{min}(\overline{\Phi})$.
Since then, the efforts redirected towards constructing an explicit
example for the non-additivity of the minimal output entropy. In this
paper, we provide a potential solution of this problem.

In \cite{key-1}, EPOSIC channels were introduced. They are non random
quantum channels that form the extreme points of all $SU(2)$-irreducibly
covariant channels. Here, we show that large classes of these channels
have nonzero minimal output entropy, and they are not entanglement
breaking. Hence, they form potential examples for violating the additivity
conjecture. The next section contains definition of EPOSIC channels,
we precisely determine the cases where the EPOSIC channels have zero
minimal output entropy, and compute the minimal output entropy for
some of EPOSIC channels. In section III, we obtain partial results
on the fulfillment of the entanglement breaking property of EPOSIC
channels. All vector spaces considered in this paper are finite dimensional.

Our main results are:
\begin{itemize}
\item For $m,n,h\in\mathbb{N}$ such $h\leq\min\{m,n\}$, the channel $\Phi_{m,n,h}$
has zero minimal output entropy if and only if the index $h$ is zero.
\item For $m,n,h\in\mathbb{N}$ such $h\leq\min\{m,n\}$, the channel $\Phi_{m,n,h}$
is not entanglement breaking whenever $m>n$. 
\end{itemize}

\section{the minimal output entropy of eposic channels}

\subsection{Background definitions and results }

A quantum system is represented mathematically by a Hilbert space
$H$ which is described by its {\it state} $\varrho$ , a positive
operator in $End(H)$ that has trace one. A pure state is a rank one
state of $H$, such a state can be written in the form $ww^{*}$ where
$w$ is a unit vector in $H$. If $H$ and $K$ are two Hilbert spaces,
then a {\it quantum channel} $\Phi:End(H)\longrightarrow End(K)$
is a completely positive trace preserving map; such a map carries
the states of $H$ into states of $K$ \cite[ch.5]{key-5}. Any quantum
channel $\Phi:End(H)\rightarrow End(K)$ has a Kraus representation
\cite[p.54-p.56]{key-14}, i.e. a set of operators $\{T_{j}\in End(H,K):\,1\leq j\leq n\}$
satisfying 
\[
\overset{n}{\underset{j=1}{\sum}}T_{j}^{*}T_{j}=I_{{\scriptscriptstyle H}}\qquad\qquad and\qquad\Phi(A)=\overset{n}{\underset{j=1}{\sum}}T_{j}AT_{j}^{*}
\]
 If $\,$$\Phi:End(H)\longrightarrow End(K)$ $\,$is a quantum channel
that has Kraus operators $\{T_{j}:\,1\leq j\leq n\}$, then the image
of a pure state $ww^{*}$ under $\Phi$ can be written in the form
$\Phi(ww^{*})=\overset{n}{\underset{{\scriptscriptstyle j=1}}{\sum}}u_{j}u_{j}^{*}$
$\,$where $u_{j}=T_{j}w\in K$.

\begin{notation}
\label{notation 2.1} For a pure state $ww^{*}$ and a quantum channel
$\Phi$, we denote the set$\{u_{j}=T_{j}w\,:1\leq j\leq n\}$ defined
above, by $U_{\Phi,ww^{*}}$.
\end{notation}

\begin{rem}
\label{rem:2.2} For a quantum channel $\Phi$ and a unit vector $w$,
since $\Phi(ww^{*})=\underset{{\scriptscriptstyle j=1}}{\overset{n}{\sum}}u_{j}u_{j}^{*}$
must be a state then $U_{\Phi,ww^{*}}$ must contain a nonzero vector. \end{rem}
\begin{defn}
\label{def:2.3}\cite[Ch.11]{key-10} Let $H,K$ be Hilbert spaces
and $\Phi:End(H)\longrightarrow End(K)$ be a quantum channel. {\it The minimal output entropy}
of $\Phi$, denoted by $S_{min}(\Phi)$ is defined by
\[
S_{min}(\Phi)=\underset{{\scriptstyle w\in H^{1}}}{\min}S(\Phi(ww^{*}))
\]
where $H^{1}$ is the set of all unit vectors in $H$, and where $S(\mathbf{\varrho})=-tr(\mathbf{\varrho\log}\mathbf{\varrho})$
is the von Neumann entropy of the state $\varrho$.
\end{defn}

\begin{rem}
\label{rem:2.4} For a state $\varrho$, the von Neumann entropy $S(\varrho)=\overset{}{\underset{i}{\sum}}-\lambda_{i}\log_{2}\lambda_{i}$
where $\left\{ \lambda_{i}\right\} _{i}$ are the eigenvalues of $\varrho$.
By convention, $0\ln0=0$.
\end{rem}
The following lemma can be proved easily by contradiction.
\begin{lem}
\label{lem:2.5} Let $H$ be a Hilbert space. If $u$ and $v$ are
two linearly independent vectors in $H$, then $uu^{*}$ and $vv^{*}$are
linearly independent.
\end{lem}

\begin{prop}
\label{pro:2.6} Let $H$ and $K$ be Hilbert spaces, and $\Phi:End(H)\longrightarrow End(K)$
be a quantum channel. Then
\begin{enumerate}
\item \textup{$S_{min}(\Phi)=0$} if and only if there exist a pure state
$ww^{*}$ of $H$ such that $\Phi(ww^{*})$ is pure.
\item If for each pure state $ww^{*}$ of $H$, the set $U_{\Phi,ww^{*}}$
contains at least two linearly independent vectors, then $S_{min}(\Phi)\neq0$.
\end{enumerate}
\end{prop}
\begin{proof}
$\,$ 

By continuity of the von Neumann entropy, and compactness of the set
of states \cite[p.29]{key-14}, the minimal output entropy is achieved.
Thus, if $S_{min}(\Phi)=0$, then there is a pure state $ww^{*}$
such that $S(\Phi(ww^{*}))=0$. By \cite[Thm 11.8]{key-10}, $\Phi(ww^{*})$
is a pure state. The other direction follows from the definition of
$S_{min}(\Phi)$. To show the second statement, let $ww^{*}$ be a
pure state. As the set $U_{\Phi,ww^{*}}$ has at least two linearly
independent vectors, by Lemma \ref{lem:2.5}, the state $\Phi(ww^{*})=\overset{}{\underset{{\scriptscriptstyle u_{j}}}{\sum}}u_{j}u_{j}^{*}$
has rank at least two. Hence, $\Phi(ww^{*})$ is not pure for any
pure state $ww^{*}$. The result follows from this and (1).
\end{proof}

\begin{defn}
\cite{key-3,key-6} Let $G$ be a group, and $\pi_{{\scriptscriptstyle H}},\pi_{{\scriptscriptstyle K}}$
be two representations of $G$ on the Hilbert spaces $H$ and $K$.
The quantum channel $\Phi:End(H)\longrightarrow End(K)$ is a $G$-
{\it covariant channel}, if 
\[
\Phi(\pi_{{\scriptscriptstyle H}}{\scriptstyle (g)}A\pi_{{\scriptscriptstyle H}}^{*}{\scriptstyle (g)})=\pi_{{\scriptscriptstyle K}}{\scriptstyle (g)}\Phi(A)\pi_{{\scriptscriptstyle K}}^{*}{\scriptstyle (g)}
\]
for all $A\in End(H)$ and $g\in G$. If both $\pi_{{\scriptscriptstyle H}}$
and $\pi_{{\scriptscriptstyle K}}$ are irreducible, the channel $\Phi$
is called $G$-irreducibly covariant.
\end{defn}

\subsection{EPOSIC channels}

In the following, we give the definition of EPOSIC channel \cite{key-1}.
We begin by reviewing the irreducible representation of $SU(2)$.
For $m\in\mathbb{N}$, let $P_{{\scriptscriptstyle m}}$ denote the
space of homogeneous polynomials of degree $m$ in the two variables
$x_{1},x_{2}$. It is a complex vector space of dimension $m+1$ with
a basis consist of $\left\{ x_{1}^{i}x_{2}^{m-i}:0\leq i\leq m\right\} $,
the space $P_{-1}$ will denote the zero vector space. 

For $m\in\mathbb{N}$, the compact group
\[
SU(2)=\left\{ \tiny\left[{\scriptstyle \begin{array}{cc}
a & b\\
-\bar{b} & \bar{a}
\end{array}}\right]:a,b\in\mathbb{C},\left|a\right|^{2}+\left|b\right|^{2}=1\right\} 
\]
has a representation $\rho_{{\scriptscriptstyle m}}$ on $P_{{\scriptscriptstyle m}}$
given for $f\in P_{{\scriptscriptstyle m}}$ and $g\in SU(2)$ by
\begin{equation}
\left(\rho_{{\scriptscriptstyle m}}{\scriptstyle (g)}f\right){\scriptstyle \left({\scriptstyle x_{1},x_{2}}\right)}=f{\scriptstyle \left(\left({\scriptstyle x_{1},x_{2}}\right){\textstyle g}\right)}=f(ax_{1}-\bar{b}x_{2},bx_{1}+\bar{a}x_{2})
\end{equation}

For each $m\in\mathbb{N}$, $\rho_{{\scriptscriptstyle m}}$ is a
unitary representation with respect to the inner product on $P_{{\scriptscriptstyle m}}$
given by
\begin{equation}
\left\langle x_{1}^{l}x_{2}^{m-l},\,x_{1}^{k}x_{2}^{m-k}\right\rangle _{P_{m}}=l!\,(m-l)!\,\delta_{lk}
\end{equation}
The set $\{\rho_{{\scriptscriptstyle m}}:m\in\mathbb{N}\}$ constitutes
the full list of the irreducible representations of $SU(2)$, see
\cite[ p.276-p.279]{key-13}.

To facilitate the computations, we choose the orthonormal basis of
$P_{{\scriptscriptstyle m}}$ given by the functions
\[
\left\{ f_{{\scriptscriptstyle l}}^{{\scriptscriptstyle m}}=a_{m}^{l}x_{1}^{l}x_{2}^{m-l}:\,0\leq l\leq m\right\} 
\]
with $a_{m}^{l}=\dfrac{{\scriptstyle 1}}{\sqrt{{\scriptstyle l!(m-l)!}}}$.
We call this basis, the standard basis of the $SU(2)$-irreducible
space $P_{{\scriptscriptstyle m}}$. The corresponding standard basis
of $End(P_{{\scriptscriptstyle m}})$ will be 
\[
\{E_{lk}=f_{{\scriptscriptstyle l-1}}^{{\scriptscriptstyle m}}f_{{\scriptscriptstyle k-1}}^{{\scriptscriptstyle m^{*}}}:1\leq l,k\leq m+1\}
\]

For the rest of this paper, we systematically use the following notations
without further mention.

\begin{notation}
\label{Notation 2.8} For $m,n,h,i,j\in\mathbb{N}$ with $0\leq h\leq\min\{m,n\}$,
$0\leq i\leq m+n-2h$, and $0\leq j\leq n$. Let
\begin{itemize}
\item $r=m+n-2h$ ,
\item $B(i):=\{{\scriptstyle j:{\scriptstyle {\scriptstyle \max\{0,-m+i+h\}}}\leq j\leq\min\{i+h,\,n\}}\}$,
\item $l_{ij}:=i-j+h$ ,
\item $\beta_{i,s,j}^{m,n,h}={\scriptstyle (-1)}^{{\scriptstyle {\scriptscriptstyle s}}}\:\sqrt{\tfrac{c_{m,n,h}\,r!\ m!\ n!}{\binom{r}{i}\,\binom{m}{i-j+h}\,\binom{n}{j}}}\:\,\tfrac{\tbinom{h}{s}\,\tbinom{n-h}{j-s}\,\tbinom{m-h}{i-j+s}}{(m-h)!}$
,
\item $\varepsilon_{i}^{j}{\scriptscriptstyle (m,n,h)}\,:=\varepsilon_{i}^{j}=\overset{{\scriptstyle {\scriptscriptstyle \min\{h,j,j+m-i-h\}}}}{\underset{{\scriptscriptstyle {\scriptscriptstyle s=\max\{0,j-i,j+h-n\}}}}{\sum}}\beta_{{\scriptscriptstyle i},s,j}^{m,n,h}$
, and 
\item $\{f_{{\scriptscriptstyle s}}^{k}:0\leq s\leq k\}$ be the standard
basis of $P_{{\scriptscriptstyle k}}$.
\end{itemize}
\end{notation}

$\quad$

\begin{defn}
\label{def:2.9} For $m,n,h\in\mathbb{N}$ with $0\leq h\leq\min\{m,n\}$.
For $0\leq j\leq n$, define the map $T_{j}:P_{r}\longrightarrow P_{{\scriptscriptstyle m}}$
by 
\[
T_{j}(f_{i}^{r})=\left\{ \begin{array}{ccccc}
\varepsilon_{i}^{j}f_{l_{ij}}^{{\scriptscriptstyle m}} &  &  & if & j\in B(i)\\
0 &  &  &  & otherwise
\end{array}\right.
\]

\end{defn}

\begin{prop}
\cite{key-1} The operators $\left\{ T_{j}:0\leq j\leq n\right\} $
in the above definition form Kraus operators for a quantum channel
\[
\Phi_{m,n,h}:End(P_{r})\rightarrow End(P_{{\scriptscriptstyle m}})
\]

\end{prop}
The channel $\Phi_{m,n,h}$ is called EPOSIC channel, and the Kraus
operators given in the above definition are called the EPOSIC Kraus
operators.

\begin{lem}
\label{lem:2.11}\cite{key-1} 
\begin{enumerate}
\item The EPOSIC channel $\Phi_{m,n,h}:End(P_{r})\longrightarrow End(P_{{\scriptscriptstyle m}})$
is an $SU(2)$-irreducibly covariant channel.
\item For each $0\leq i\leq r$, we have
\[
\Phi_{m,n,h}(f_{{\scriptscriptstyle i}}^{{\scriptscriptstyle r}}f_{{\scriptscriptstyle i}}^{{\scriptscriptstyle r^{*}}})=\underset{{\scriptscriptstyle {\scriptscriptstyle j=\max\{0,-m+i+h\}}}}{\overset{{\scriptscriptstyle \min\{i+h,n\}}}{\sum}}(\varepsilon_{i}^{j}{\scriptscriptstyle (m,n,h)})^{2}f_{l_{ij}}^{{\scriptscriptstyle m}}f_{l_{ij}}^{{\scriptscriptstyle m^{*}}}
\]

\end{enumerate}
\end{lem}

\begin{lem}
\label{lem:2.12}\cite[Remark 4.6]{key-1} Let $m,n,h\in\mathbb{N}$
with $0\leq h\leq\min\{m.n\}$ , and $\{T_{j}:0\leq j\leq n\}$ be
the EPOSIC Kraus operators of $\Phi_{m,n,h}$. 
\begin{enumerate}
\item For $0\leq j\leq n$, we have 
\[
T_{j}=\underset{{\scriptscriptstyle i=\max\{0,j-h\}}}{\overset{{\scriptscriptstyle \min\{r,m-h+j\}}}{\sum}}\varepsilon_{i}^{j}f_{{\scriptscriptstyle l_{ij}}}^{{\scriptscriptstyle m}}f_{{\scriptscriptstyle i}}^{r^{*}}.
\]

\item For $w=\overset{r}{\underset{{\scriptscriptstyle i=0}}{\sum}}w_{i}f_{{\scriptscriptstyle i}}^{{\scriptscriptstyle r}}\in P_{{\scriptscriptstyle r}}$,
we have 
\[
T_{j}w=\underset{{\scriptscriptstyle i=\max\{0,j-h\}}}{\overset{{\scriptscriptstyle \min\{r,m-h+j\}}}{\sum}}w_{i}\varepsilon_{i}^{j}f_{{\scriptscriptstyle l_{ij}}}^{{\scriptscriptstyle m}}
\]

\end{enumerate}
\end{lem}
For more details about EPOSIC channel, we refer the reader to \cite{key-1}.

\subsection{The minimal output entropy of $\Phi_{m,n,h}$}

In this section, we determine the EPOSIC channels with zero minimal
output entropy. Namely, we show that the minimal output entropy is
zero if and only if the index $h$ in $\Phi_{m,n,h}$ is zero.
\begin{prop}
\label{pro:2.13} For $m,n\in\mathbb{N}$, the channel $\Phi_{m,n,0}$
has zero minimal output entropy.\end{prop}
\begin{proof}
$\,$

For $k\in\mathbb{N}$, let $\{f_{{\scriptscriptstyle i}}^{k}:{\scriptstyle 0\leq i\leq k}\}$
denote the standard basis for $P_{{\scriptscriptstyle k}}$. By Lemma
\ref{lem:2.11}, we have
\[
\Phi_{m,n,0}(f_{{\scriptscriptstyle 0}}^{{\scriptscriptstyle r}}f_{{\scriptscriptstyle 0}}^{{\scriptscriptstyle r^{*}}})=\underset{{\scriptscriptstyle {\scriptscriptstyle j=\max\{0,-m\}}}}{\overset{{\scriptscriptstyle \min\{0,n\}}}{\sum}}(\varepsilon_{0}^{j}{\scriptscriptstyle (m,n,0)})^{2}f_{l_{ij}}^{{\scriptscriptstyle m}}f_{l_{ij}}^{{\scriptscriptstyle m^{*}}}=f_{{\scriptscriptstyle 0}}^{{\scriptscriptstyle m}}f_{{\scriptscriptstyle 0}}^{{\scriptscriptstyle m^{*}}}
\]
i.e $\Phi_{m,n,0}(f_{{\scriptscriptstyle 0}}^{{\scriptscriptstyle r}}f_{{\scriptscriptstyle 0}}^{{\scriptscriptstyle r^{*}}})$
is a pure state. The result follows by Proposition \ref{pro:2.6}.
\end{proof}

The proof of the following proposition is purely technical calculations,
which we defer to the appendix. 
\begin{prop}
\label{pro:2.14}Let $m,n,h\in\mathbb{N}$ with $0<h\leq\min\{m,n\}$,
and $\Phi_{m,n,h}$ be the associated EPOSIC channel. For any pure
state $ww^{*}\in End(P_{r})$, the set $U_{\Phi_{m,n,h},ww^{*}}$
contains at least two linearly independent vectors.
\end{prop}
By Proposition \ref{pro:2.14}, and Proposition \ref{pro:2.6}, we
have 
\begin{cor}
For strictly positive integers $m,n$ and for $0<h\leq\min\{m,n\}$,
the minimal output entropy $S_{min}(\Phi_{m,n,h})$ is non zero.
\end{cor}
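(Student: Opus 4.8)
The plan is to obtain the corollary as an immediate consequence of Proposition \ref{pro:2.4} combined with part (2) of Proposition \ref{pro:1.3}. Recall that Proposition \ref{pro:1.3}(2) asserts the following implication: if, for every pure state $\varrho$ of the input space, the associated set $U_{\Phi,\varrho}$ contains at least two linearly independent vectors, then $S_{min}(\Phi)\neq 0$. Proposition \ref{pro:2.4} establishes exactly this hypothesis for the channel $\Phi_{m,n,h}$ precisely under the standing assumption $0<h\leq\min\{m,n\}$ of the corollary. So the entire argument reduces to chaining these two statements together.

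First I would fix the standard data: by Lemma \ref{lem:1.6}(1) the map $\Phi_{m,n,h}:End(P_{r})\to End(P_{m})$ with $r=m+n-2h$ is a bona fide quantum channel, so Proposition \ref{pro:1.3} is applicable to it. The set $U_{\Phi_{m,n,h},\varrho}$ is understood relative to the EPOSIC Kraus operators $\{T_{j}:0\leq j\leq n\}$ of Definition \ref{def:1.5}, which are the same operators used throughout this section and in Proposition \ref{pro:2.4}. Then I would simply invoke Proposition \ref{pro:2.4}: for every pure state $\varrho\in\mathcal{P}(P_{r})$, the set $U_{\Phi_{m,n,h},\varrho}$ contains two linearly independent vectors. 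Feeding this into Proposition \ref{pro:1.3}(2) yields $S_{min}(\Phi_{m,n,h})\neq 0$, as desired.

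Since every ingredient has already been proved, there is no genuine obstacle here; the corollary is a formal deduction. The only point worth a moment's care is to confirm that the hypothesis of Proposition \ref{pro:1.3}(2), which is phrased for an arbitrary Kraus representation, is matched verbatim by the conclusion of Proposition \ref{pro:2.4}, which is stated in terms of the specific EPOSIC Kraus operators. This match is immediate: the existence of two linearly independent vectors in $U_{\Phi_{m,n,h},\varrho}$ forces $\mathrm{rank}\,\Phi_{m,n,h}(\varrho)\geq 2$, so $\Phi_{m,n,h}(\varrho)$ is never pure, and hence no pure state is mapped to a pure state, giving $S_{min}\neq 0$ via Proposition \ref{pro:1.3}(1) as well.
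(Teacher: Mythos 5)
Your proposal is correct and follows exactly the route the paper intends: the paper dispatches this corollary in one line as a direct consequence of Proposition \ref{pro:2.4} (every $U_{\Phi_{m,n,h},\varrho}$ contains two linearly independent vectors when $h>0$) combined with Proposition \ref{pro:1.3}(2), with Lemma \ref{lem:1.6} guaranteeing that $\Phi_{m,n,h}$ is indeed a quantum channel. Nothing is missing.
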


The following theorem summarizes the results of this section.
\begin{thm}
Let $m,n,h\in\mathbb{N}$ with $0\leq h\leq\min\{m,n\}$, and $\Phi_{m,n,h}$
be the associated EPOSIC channel. Then $S_{min}(\Phi_{m,n,h})=0$
if and only if $h=0$.
\end{thm}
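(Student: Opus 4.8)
The plan is to prove the biconditional $S_{min}(\Phi_{m,n,h})=0 \iff h=0$ by establishing the two directions separately, since essentially all the required machinery has already been assembled in the preceding results of this section.

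For the direction $h=0 \Rightarrow S_{min}(\Phi_{m,n,0})=0$, I would simply invoke Proposition \ref{pro:2.1}, which states exactly that $\Phi_{m,n,0}$ has zero minimal output entropy for all $m,n\in\mathbb{N}$. That proposition is proved by exhibiting an explicit pure state (namely $f_{0}^{r}f_{0}^{r^{*}}$) whose image under $\Phi_{m,n,0}$ is the pure state $f_{0}^{m}f_{0}^{m^{*}}$, and then appealing to part (1) of Proposition \ref{pro:1.3}. So this half of the theorem is immediate and requires no new argument.

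For the converse direction, it suffices to prove the contrapositive in the form $h>0 \Rightarrow S_{min}(\Phi_{m,n,h})\neq 0$. This is precisely the content of the Corollary stated just above the theorem, which follows from combining Proposition \ref{pro:2.4} with part (2) of Proposition \ref{pro:1.3}. The chain of reasoning is: Proposition \ref{pro:2.4} shows that when $0<h\leq\min\{m,n\}$, for every pure state $\varrho$ the set $U_{\Phi_{m,n,h},\varrho}$ contains at least two linearly independent vectors; then part (2) of Proposition \ref{pro:1.3} guarantees that the rank of $\Phi_{m,n,h}(\varrho)$ is at least two for every pure state $\varrho$, so no output is pure, and hence by part (1) of Proposition \ref{pro:1.3} the minimal output entropy cannot vanish.

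Thus the theorem is really a clean repackaging of the section's results, and there is no genuine obstacle remaining at this stage; the conceptual difficulty has already been absorbed into Proposition \ref{pro:2.4}. The subtle point in that proposition—and therefore the true heart of the whole argument—is the use of Lemma \ref{lem2.3} to locate, for each index $i$, two distinct Kraus labels $j_{1}<j_{2}$ with nonvanishing coefficients $\varepsilon_{i}^{j_{1}},\varepsilon_{i}^{j_{2}}$, combined with the minimality argument on $i_{1}$ that forces a coefficient at a strictly smaller index $i_{2}=i_{1}-(j_{2}-j_{1})$ to be nonzero, contradicting minimality. If I were reconstructing the proof from scratch, the one place I would expect to spend real effort is verifying that the linear-dependence assumption $u_{j_{2}}=\alpha u_{j_{1}}$ genuinely propagates a nonzero coefficient below the minimal $i_{1}$; everything else is bookkeeping with the summation ranges $\max\{0,j-h\}\leq i\leq\min\{r,m-h+j\}$ and the index identity $l_{ij}=i-j+h$.
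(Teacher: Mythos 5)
Your proposal is correct and follows exactly the paper's own route: the theorem is stated there as a summary of the section, with the $h=0$ direction supplied by Proposition \ref{pro:2.1} and the $h>0$ direction by the corollary obtained from Proposition \ref{pro:2.4} together with Proposition \ref{pro:1.3}(2). Nothing further is needed.
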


\subsection{Computing the minimal output entropy for special cases}

In this section we compute the minimal output entropy of $\Phi_{m,1,1}$
for $m\in\mathbb{N}\smallsetminus\{0\}$. We start by computing the
eigenvalues of $\Phi_{m,1,1}(ww^{*})$ for any pure state $ww^{*}$
then minimizing $S\left(\Phi_{m,1,1}(ww^{*})\right)$ over such states. 

As the channel $\Phi_{m,1,1}:End(P_{{\scriptscriptstyle m-1}})\longrightarrow End(P_{{\scriptscriptstyle m}})$
has only two Kraus operators \cite{key-1}, for any pure state $ww^{*}\in End(P_{{\scriptscriptstyle m-1}})$,
we have
\[
\Phi_{m,1,1}(ww^{*})=u_{{\scriptscriptstyle 0}}u_{{\scriptscriptstyle 0}}^{*}+u_{{\scriptscriptstyle 1}}u_{{\scriptscriptstyle 1}}^{*}
\]
By Proposition \ref{pro:2.14}, the vectors $u_{{\scriptscriptstyle 0}},u_{{\scriptscriptstyle 1}}$
are linearly independent in $P_{{\scriptscriptstyle m}}$. Complete
$u_{0},u_{1}$ to a basis $\{u_{0},u_{1},u_{2},u_{3},...,u_{m}\}$
for $P_{{\scriptscriptstyle m}}$, where $\{u_{2},u_{3},...,u_{m}\}$
is an orthonormal basis for $\{u_{0},u_{1}\}^{\perp}$. Writing the
matrix $\Phi_{m,1,1}(ww^{*})$ in the basis $\{u_{0},u_{1},.....u_{m}\}$
we get the $(m+1)\times(m+1)$ matrix given by

\[
\Lambda_{\Phi_{m,1,1}}:=\left({\scriptstyle \begin{array}{cccccc}
\left\langle u_{0}\left|u_{0}\right.\right\rangle  & \left\langle u_{0}\left|u_{1}\right.\right\rangle  & 0 &  & \cdots & 0\\
\left\langle u_{1}\left|u_{0}\right.\right\rangle  & \left\langle u_{1}\left|u_{1}\right.\right\rangle  & 0 &  & \cdots & 0\\
0 & 0 & 0 &  & \cdots & 0\\
\vdots & \vdots & \vdots &  &  & \vdots\\
0 & 0 & 0 &  & \cdots & 0
\end{array}}\right)
\]
whose nonzero eigenvalues are eigenvalues of 
\[
\left({\scriptstyle \begin{array}{cc}
\left\langle u_{0}\left|u_{0}\right.\right\rangle  & \left\langle u_{0}\left|u_{1}\right.\right\rangle \\
\left\langle u_{1}\left|u_{0}\right.\right\rangle  & \left\langle u_{1}\left|u_{1}\right.\right\rangle 
\end{array}}\right)
\]

By the definition of $\Lambda_{\Phi_{m,1,1}}$, we have 
\begin{lem}
\label{lem:2.17}Let $ww^{*}$ be a pure state in $End(P_{{\scriptscriptstyle m-1}})$.
The non zero eigenvalues of $\Phi_{m,1,1}(ww^{*})$ are\textup{ given
by
\[
\lambda_{1,2}=\frac{1\pm\sqrt{1-4R}}{2}
\]
 where $R=\left\Vert u_{0}\right\Vert ^{2}\left\Vert u_{1}\right\Vert ^{2}-\left|\left\langle u_{0}\left|u_{1}\right.\right\rangle \right|^{2}$.}
\end{lem}

By concavity of von Neumann entropy \cite[ch.11]{key-10} and by \cite[Prop.13.4]{key-14},
the von Neumann entropy of $\Phi_{m,1,1}(ww^{*})$ achieves its minimum
when the difference between $\lambda_{1}$ and $\lambda_{2}$ is maximal,
this is when $R$ takes its minimal value. The following lemma whose
proof was deferred to the appendix, gives the minimal value of $R$.
\begin{lem}
\label{lem:2.18} Let $m\in\mathbb{N}\smallsetminus\{0\}$. For a
pure state $ww^{*}\in End(P_{{\scriptscriptstyle m-1}})$. If $u_{0},u_{1}$
are the elements in $U_{{\scriptscriptstyle \Phi_{m,1,1},ww^{*}}}$\textup{
then} the minimal value of $\left\Vert u_{0}\right\Vert ^{2}\left\Vert u_{1}\right\Vert ^{2}-\left|\left\langle u_{0}\left|u_{1}\right.\right\rangle \right|^{2}$
is $\frac{m}{(m+1)^{2}}$.
\end{lem}

Consequently, the state that minimize von Neumann entropy is the state
with the eigenvalues 
\[
\lambda_{1,2}=\{\frac{{\scriptstyle 1}}{{\scriptstyle m+1}}\,,\,\frac{{\scriptstyle m}}{{\scriptstyle m+1}}\}
\]
By Definition \ref{def:2.3} and Remark \ref{rem:2.4}, we get

\[
S_{min}(\Phi_{m,1,1})=-[\frac{{\scriptstyle 1}}{{\scriptstyle m+1}}\log_{2}\frac{{\scriptstyle 1}}{{\scriptstyle m+1}}+\frac{{\scriptstyle m}}{{\scriptstyle m+1}}\log_{2}\frac{{\scriptstyle m}}{{\scriptstyle m+1}}]
\]

\section{entanglement breaking property of eposic channels}

\subsection{Background definitions and results}

A property of quantum channels that has been studied and used to classify
the quantum channel is their ability to eliminate the entanglement
between the input states of composite systems. Such channels are called
the Entanglement Breaking Trace preserving channels denoted by E.B.T.
Here is a description by \noun{P.}Shor \cite{key-11} for the E.B.T
channels 
\begin{quote}
`` Entanglement breaking channels are channels which destroy entanglement
with other quantum systems. That is, when the input state is entangled
between the input space $H_{in}$ and another quantum system $H_{ref}$
, the output of the channel is no longer entangled with the system
$H_{ref}$ .''
\end{quote}

\begin{lem}
Let $H,K$ be Hilbert spaces and $\Phi:End(H)\longrightarrow End(K)$
be a quantum channel. For $n\in\mathbb{N}$, the map $\Phi\otimes I_{n}:End(H\otimes\mathbb{C}^{n})\longrightarrow End(K\otimes\mathbb{C}^{n})$
defined by taking $A\otimes B$ to $\Phi(A)\otimes B$ and extends
by linearity is a quantum channel.
\end{lem}

\begin{defn}
Let $H_{1}$ and $H_{2}$ be Hilbert spaces. A state $\varrho\in D(H_{1}\otimes H_{2})$
is said to be {\it separable state} if it can be written as a convex
combination of states of the form $\sigma\otimes\tau$ where $\sigma\in D(H_{1})$,
$\tau\in D(H_{2})$. A non-separable state is called an entangled
state. 
\end{defn}

\begin{defn}
\cite{key-8} Let $H,K$ be Hilbert spaces. A quantum channel $\Phi:End(H)\longrightarrow End(K)$
is said to be entanglement breaking if $\Phi\otimes I_{n}(\varrho)$
is separable for any $\varrho\in D(H\otimes\mathbb{C}^{n})$ and $n\in\mathbb{N}$. 
\end{defn}

Recall that in a finite dimensional setting, a characterization of
a quantum channel $\Phi$ is given by its Choi matrix \cite{key-14},
a matrix that is given by
\[
C(\Phi)=\overset{{\scriptscriptstyle d_{H}}}{\underset{{\scriptscriptstyle i,j=1}}{\sum}}\Phi(E_{ij})\otimes E_{ij}
\]
where $E_{ij}$ is the standard basis for $End(H)$. The following
proposition is rephrasing of Theorem 4 in \cite{key-8}.
\begin{prop}
\label{pro:3.4} Let $H,K$ be Hilbert spaces and $\Phi:End(H)\longrightarrow End(K)$
is a quantum channel. The following statements are equivalent
\begin{enumerate}
\item $\Phi$ is an E.B.T channel.
\item The Choi matrix of $\Phi$ is separable. 
\item $\Phi$ can be written in operator sum form using only Kraus operators
of rank one.
\end{enumerate}
\end{prop}

By \cite[Prop 5.2  and Thm5.3]{key-14}, we have 
\begin{lem}
\label{lem:3.5} Let $H$ and $K$ be Hilbert spaces, and $\Phi:End(H)\longrightarrow End(K)$
be a quantum channel. The rank of the Choi matrix of $\Phi$ is an
achievable lower bound for the number of Kraus operators of $\Phi$. 
\end{lem}

The following proposition follows directly by \cite[Thm 1]{key-9}
and Proposition \ref{pro:3.4}. The corollary to it, is just a generalization
of \cite[Theroem 6]{key-8}.
\begin{prop}
\label{pro.3.6} Let $H$,$K$ be Hilbert spaces of dimension $d_{{\scriptscriptstyle H}},d_{{\scriptscriptstyle K}}$,
and $\Phi:End(H)\rightarrow End(K)$ be a quantum channel. If $rankC(\Phi)<\max\{d_{{\scriptscriptstyle H}},rank(Tr_{\overline{H}}(C(\Phi)))\}$
then $\Phi$ is not E.B.T
\end{prop}

By Lemma \ref{lem:3.5}, and Proposition \ref{pro.3.6}, we get
\begin{cor}
\label{cor:3.7} Let $H,K$ be Hilbert spaces of dimension $d_{{\scriptscriptstyle H}}$,$d_{{\scriptscriptstyle K}}$
respectively such that $d_{{\scriptscriptstyle H}}\geq d_{{\scriptscriptstyle K}}$.
Let $\Phi:End(H)\rightarrow End(K)$ be a quantum channel. If $\Phi$
can be written in Kraus operator fewer than $d_{{\scriptscriptstyle H}}$
then $\Phi$ is not E.B.T
\end{cor}

Let $H,K$ be Hilbert spaces, let $\Phi^{*}$ denote the dual map
of the quantum channel $\Phi:End(H)\longrightarrow End(K)$. It is
evident that if $\{T_{j}:1\leq j\leq k\}$ is Kraus operators for
$\Phi$ then $\{T_{j}^{*}:1\leq j\leq k\}$ will be Kraus operators
for $\Phi^{*}$. As
\[
T_{j}=uv^{*}\Longleftrightarrow T_{j}^{*}=vu^{*}
\]
then by Proposition \ref{pro:3.4}(3), we have 
\begin{lem}
\label{lem:3.8} Let $\Phi$ be a quantum channel then $\Phi$ is
an E.B.T map if and only if its dual $\Phi^{*}$ is an E.B.T map.
\end{lem}

\subsection{The E.B.T property of EPOSIC channels.}

In this section, we classify EPOSIC channels according to their E.B.T
property. We didn't obtain a full classification, we state below the
partial results that we obtained.

\begin{thm}
For $m\in\mathbb{N}$, the channel $\Phi_{m,m,m}$ and $\Phi_{0,m,0}$
are E.B.T channels.\end{thm}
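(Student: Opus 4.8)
The plan is to prove the two cases separately, using the structural characterization of E.B.T. channels from Proposition \ref{pro:5.4}(3): a channel is E.B.T. if and only if it can be written in operator sum form using only Kraus operators of rank one. The strategy in both cases is therefore to exhibit an explicit set of rank-one Kraus operators.

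The case $\Phi_{0,m,0}$ is the easier one and I would dispatch it first. Here $r=0+m-0=m$, so $\Phi_{0,m,0}:End(P_{{\scriptscriptstyle m}})\longrightarrow End(P_{{\scriptscriptstyle 0}})$, and since $P_{{\scriptscriptstyle 0}}$ is one-dimensional this is essentially the trace-to-scalar map noted in Section 2. Each EPOSIC Kraus operator $T_j$ maps into the one-dimensional space $P_{{\scriptscriptstyle 0}}$, so $T_j=f_{{\scriptscriptstyle 0}}^{{\scriptscriptstyle 0}}\,v_j^{*}$ for some vector $v_j\in P_{{\scriptscriptstyle m}}$, which is manifestly rank one. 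Thus every Kraus operator already has rank one, and Proposition \ref{pro:5.4}(3) gives the result immediately.

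For $\Phi_{m,m,m}$ I would again reach for Proposition \ref{pro:5.4}(3), but now the point is to examine the EPOSIC Kraus operators $T_j$ directly. With $m=n=h$ we have $r=m+n-2h=0$, so $\Phi_{m,m,m}:End(P_{{\scriptscriptstyle 0}})\longrightarrow End(P_{{\scriptscriptstyle m}})$ has one-dimensional domain $P_{{\scriptscriptstyle 0}}$. The Kraus operators from Definition \ref{def:1.5} collapse: the summation index $i$ runs only over $i=0$ (since $r=0$), so each $T_j=\varepsilon_{0}^{j}\,f_{{\scriptscriptstyle l_{0j}}}^{{\scriptscriptstyle m}}\,f_{{\scriptscriptstyle 0}}^{0^{*}}$ is a single rank-one term, being a vector in $P_{{\scriptscriptstyle m}}$ times the functional $f_{{\scriptscriptstyle 0}}^{0^{*}}$ on $P_{{\scriptscriptstyle 0}}$. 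Hence the channel is already written with rank-one Kraus operators and is E.B.T.

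The main thing to get right is the bookkeeping of the index ranges when $r=0$, so that I can assert each $T_j$ genuinely reduces to a single rank-one term rather than a sum; once the degeneracy of the summation is verified, both halves follow from the rank-one criterion with essentially no further computation. I do not expect a serious obstacle here, since the special structure $r=0$ forces both the domain and the Kraus-operator sums to be trivial; the only subtlety is confirming from the formula for $\varepsilon_i^j{\scriptscriptstyle (m,n,h)}$ that the operators are nonzero so that they form a bona fide Kraus representation, which can be read off from Lemma \ref{lem:2.2}.
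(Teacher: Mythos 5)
Your proposal is correct, and for $\Phi_{m,m,m}$ it is essentially the paper's argument: both observe that $r=m+n-2h=0$ forces the sum defining each EPOSIC Kraus operator to collapse to the single term $i=0$, so every $T_j$ has rank at most one, and Proposition \ref{pro:5.4}(3) applies. Where you diverge is in the treatment of $\Phi_{0,m,0}$. The paper does not argue this case directly at all: it invokes the duality relation $\Phi_{m,m,m}^{*}=\frac{1}{m+1}\Phi_{0,m,0}$ from \cite{key-1} together with Lemma \ref{lem:5.3} (a channel is E.B.T.\ if and only if its dual is), so the second half of the theorem is deduced from the first. You instead note that the codomain $P_{0}$ is one-dimensional, so every Kraus operator of $\Phi_{0,m,0}$ is automatically of rank at most one, and apply the rank-one criterion a second time. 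Your route is more elementary and self-contained, since it does not rely on the external identity $\Phi_{m,m,m}^{*}=\frac{1}{m+1}\Phi_{0,m,0}$ or on the duality lemma; the paper's route has the advantage of illustrating the dual-channel technique that it reuses in Proposition \ref{pro:6.2}(2). One cosmetic remark: Proposition \ref{pro:5.4}(3) asks for Kraus operators of rank one, and a priori some $T_j$ could vanish (Lemma \ref{lem:2.2} only guarantees $\varepsilon_{0}^{0}\neq0$ and $\varepsilon_{0}^{m}\neq0$, not the intermediate $\varepsilon_{0}^{j}$); this is harmless, since zero operators can simply be dropped from the operator-sum representation, but you should say so rather than lean on Lemma \ref{lem:2.2} to rule them out.
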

\begin{proof}
$\,$ 

Let $\{T_{j},0\leq j\leq m\}$ be EPOSIC Kraus operators for $\Phi_{m,m,m}$.
By Lemma \ref{lem:2.12}, we have 
\[
rank(T_{j})\leq\min\left\{ 0,\,m,\,j,\,m-j\right\} +1=1
\]
for any $0\leq j\leq m$. So, by Proposition \ref{pro:3.4}(3), the
channel $\Phi_{m,m,m}$ is E.B.T. As by \cite[Sec.5]{key-1} we have
\[
\Phi_{m,m,m}^{*}=\frac{1}{m+1}\Phi_{0,m,0}
\]
 then by Lemma \ref{lem:3.8} the channel $\Phi_{0,m,0}$ is also
E.B.T.
\end{proof}

\begin{prop}
\label{pro:3.10} Let $m,n,h\in\mathbb{N}$ with $0\leq h\leq\min\{m,n\}$
\begin{enumerate}
\item If $n\geq2h$ then $\Phi_{m,n,h}$ is not E.B.T for any $m>2h$.
\item If $n\leq2h$ then $\Phi_{m,n,h}$ is not E.B.T for any $m>n$.
\end{enumerate}
\end{prop}
\begin{proof}
$\,$ 

The channel
\[
\Phi_{m,n,h}:End(P_{{\scriptscriptstyle r}})\longrightarrow End(P_{{\scriptscriptstyle m}})
\]
has $n+1$ EPOSIC Kraus operators. If $n\geq2h$ then $dim(P_{{\scriptscriptstyle r}})\geq dim(P_{{\scriptscriptstyle m}})$,
and by Corollary \ref{cor:3.7}, we get that $\Phi_{m,n,h}$ is not
E.B.T whenever $m>2h$. If $n\leq2h$ then $r\leq m$ and by (1) the
channel 
\[
\Phi_{r,n,n-h}:End(P_{m})\longrightarrow End(P_{r})
\]
 is not E.B.T whenever $r>2(n-h)$ i.e whenever $m>n$. As by \cite[Sec.5]{key-1}
we have 
\[
\Phi_{m,n,h}=\Phi_{r,n,n-h}^{*}
\]
then by Lemma \ref{lem:3.8} the channel $\Phi_{m,n,h}$ is not E.B.T
whenever $m>n$.
\end{proof}

\begin{cor}
Let $m,n,h\in\mathbb{N}$ with $0\leq h\leq\min\{m,n\}$. The channel
$\Phi_{m,n,h}$ is not E.B.T whenever $m>n$. In Particular, $\Phi_{m,h,h}$
is not E.B.T for any $0\leq h<m$.
\end{cor}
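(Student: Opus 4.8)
The plan is to derive this corollary as a direct consequence of Proposition \ref{pro:6.2}, which already splits the analysis into the two regimes $n \geq 2h$ and $n \leq 2h$. First I would observe that these two cases together cover all admissible triples $(m,n,h)$, since for any $h$ either $n \geq 2h$ or $n \leq 2h$ holds. In the regime $n \geq 2h$, part (1) of Proposition \ref{pro:6.2} gives that $\Phi_{m,n,h}$ is not E.B.T whenever $m > 2h$; since $n \geq 2h$, the hypothesis $m > n$ forces $m > n \geq 2h$, so the conclusion applies. In the regime $n \leq 2h$, part (2) gives directly that $\Phi_{m,n,h}$ is not E.B.T whenever $m > n$, which is exactly our hypothesis. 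Thus in either regime the assumption $m > n$ suffices, and the first assertion follows.

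For the particular case, I would specialize to the channel $\Phi_{m,h,h}$, so that $n = h$. The claim is that this channel is not E.B.T for any $0 \leq h < m$. Setting $n = h$ in the first assertion, the condition $m > n$ becomes $m > h$, i.e. $h < m$. Hence the general statement applied to $n = h$ immediately yields that $\Phi_{m,h,h}$ is not E.B.T whenever $h < m$, which is the desired conclusion.

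The only point requiring any care is the boundary case $n = 2h$, which lies in both regimes; here the two parts of Proposition \ref{pro:6.2} must be checked to be consistent, but since part (1) already covers $m > 2h = n$ and part (2) covers $m > n$, there is no gap. I do not expect any genuine obstacle: the entire content of the corollary is a logical repackaging of Proposition \ref{pro:6.2} to expose the clean uniform hypothesis $m > n$, together with the trivial substitution $n = h$ for the special case. The proof is therefore short and purely combinatorial in its case analysis.
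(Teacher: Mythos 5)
Your proposal is correct and matches the paper's intent exactly: the paper states this corollary without a printed proof precisely because it is the direct case split on $n\geq 2h$ versus $n\leq 2h$ from Proposition \ref{pro:6.2} that you describe, with $m>n\geq 2h$ feeding part (1) and part (2) applying verbatim otherwise. The specialization $n=h$ for $\Phi_{m,h,h}$ is likewise immediate, so there is nothing to add.
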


\section*{acknowledgment}

We thank professors B. Collins and T. Giordano for their help and
advice. We also would like to acknowledge the financial support from
king Faisal university.

\appendix

\section{$\,$}

\subsection{Proof of Proposition \ref{pro:2.14}.}

For the proof of Proposition \ref{pro:2.14}, the following lemmas
are needed. The first one can be proved by direct computation using
the formula
\[
\varepsilon_{i}^{j}{\scriptscriptstyle (m,n,h)}=\overset{{\scriptstyle {\scriptscriptstyle \min\{h,j,j+m-i-h\}}}}{\underset{{\scriptscriptstyle {\scriptscriptstyle s=\max\{0,j-i,j+h-n\}}}}{\sum}}\beta_{{\scriptscriptstyle i},s,j}^{m,n,h}
\]

\begin{lem}
\label{lem:4.1} For $m,n,h\in\mathbb{N}$ with $0\leq h\leq\min\{m,n\}$,
and $r=m+n-2h$ , then 
\begin{enumerate}
\item $\varepsilon_{i}^{{\scriptscriptstyle 0}}\neq0$, $\qquad\;\:$$\quad$for
$\,$ $0\leq i\leq m-h$.
\item $\varepsilon_{i}^{{\scriptscriptstyle i-m+h}}\neq0$, $\quad$for$\,$
$m-h\leq i\leq r$.
\item $\varepsilon_{i}^{{\scriptscriptstyle i+h}}\neq0$, $\qquad\;\:$for$\,$
$0\leq i\leq n-h$. 
\item $\varepsilon_{i}^{n}\neq0$, $\qquad\;\:$$\;$for$\,$ $n-h\leq i\leq r$.
\end{enumerate}
\end{lem}

\begin{lem}
\label{lem:4.2} Let $m,n,h\in\mathbb{N}$ with $0<h\leq\min\{m,n\}$
and $r=m+n-2h$ . For any $0\leq i\leq r$, we have 
\[
{\textstyle \max\{0,-m+i+h\}}<\min\{i+h,\,n\}
\]
\end{lem}
\begin{proof}
$\,$

Let
\[
j_{1}=\max\{0,-m+i+h\}\qquad\quad\quad\quad
\]
\[
\qquad\quad=\left\{ \begin{array}{cccccc}
0 &  & if &  & 0\leq i\leq m-h\\
i-m+h &  & if &  & m-h\leq i\leq r
\end{array}\right.
\]
and
\[
j_{2}=\min\{i+h,\,n\}=\left\{ \begin{array}{cccccc}
i+h &  & if &  & 0\leq i\leq n-h\\
n &  & if &  & n-h\leq i\leq r
\end{array}\right.
\]
If $j_{1}=0$, then $j_{1}<h\leq j_{2}$. Otherwise, 
\[
j_{1}=i-(m-h)\leq\min\{i,\,r-m+h\}\quad
\]
\[
\qquad\qquad\;=\min\{i,\,n-h\}
\]
\[
\qquad\qquad\qquad<\min\{i+h,\,n\}=j_{2}
\]

\end{proof}

Recall the definition of
\[
B(i)=\{j:{\textstyle \max\{0,-m+i+h\}}\leq j\leq\min\{i+h,\,n\}\}
\]

\begin{cor}
\label{cor:4.3} For $m,n,h\in\mathbb{N}$ with $0<h\leq\min\{m,n\}$,
let $r=m+n-2h$. For each $0\leq i\leq r$, there exist $j_{1},j_{2}\in B(i)$
such that $j_{1}<j_{2}$, and 
\[
\varepsilon_{i}^{j_{1}}\neq0,\qquad\varepsilon_{i}^{j_{2}}\neq0
\]
\end{cor}
\begin{proof}
$\,$

Let $j_{1}=\max\{0,-m+i+h\}$ and $j_{2}=\min\{i+h,\,n\}.$ Both $j_{1},j_{2}\in B(i)$,
and by Lemma \ref{lem:4.2} we have $j_{1}<j_{2}$. Lemma \ref{lem:4.1}
gives that both $\varepsilon_{i}^{j_{1}}$ and $\varepsilon_{i}^{j_{2}}$
are nonzero.
\end{proof}

Next we give the proof of Proposition \ref{pro:2.14}.
\begin{prop}
Let $m,n,h\in\mathbb{N}$ with $0<h\leq\min\{m,n\}$, and $\Phi_{m,n,h}$
be the associated EPOSIC channel. For any pure state $ww^{*}\in End(P_{r})$,
the set $U_{\Phi_{m,n,h},ww^{*}}$ contains at least two linearly
independent vectors.\end{prop}
\begin{proof}
$\,$

Let $r=m+n-2h$ and $ww^{*}$ be any pure state in $End(P_{{\scriptscriptstyle r}})$
for some unit vector $w\in P_{r}$. Let $w=\overset{{\scriptscriptstyle r}}{\underset{{\scriptscriptstyle i=0}}{\sum}}w_{i}f_{{\scriptscriptstyle i}}^{{\scriptscriptstyle r}}$
where $\overset{{\scriptscriptstyle r}}{\underset{{\scriptscriptstyle i=0}}{\sum}}\left|w_{i}\right|^{2}=1$,
and $i_{1}$ be the smallest index $i$ such that $w_{i}\neq0$. By
Corollary \ref{cor:4.3}, there exist $j_{1}<j_{2}\in B(i_{1})$ such
that $\varepsilon_{i_{1}}^{j_{1}}\neq0$ and $\varepsilon_{i_{1}}^{j_{2}}\neq0$.\\
Since $j\in B(i_{1})$ if and only if 
\[
\max\{0,j-h\}\leq i_{1}\leq\min\{r,m-h+j\}
\]
 then by Lemma \ref{lem:2.12}, we have 
\[
u_{j_{1}}=T_{j_{1}}w=\underset{{\scriptscriptstyle i=\max\{0,j_{1}-h\}}}{\overset{{\scriptscriptstyle \min\{r,m-h+j_{1}\}}}{\sum}}w_{i}\varepsilon_{i}^{j_{1}}f_{{\scriptscriptstyle i-j_{1}+h}}^{{\scriptscriptstyle m}}\neq0
\]
and
\[
u_{j_{2}}=T_{j_{2}}w=\underset{{\scriptscriptstyle i=\max\{0,j_{2}-h\}}}{\overset{{\scriptscriptstyle \min\{r,m-h+j_{2}\}}}{\sum}}w_{i}\varepsilon_{i}^{j_{2}}f_{{\scriptscriptstyle i-j_{2}+h}}^{{\scriptscriptstyle m}}\neq0
\]
 If $U_{\Phi_{m,n,h},ww^{*}}=\{u_{j}:0\leq j\leq n\}$ does not contain
two linearly independent vectors, then there exist $\alpha\neq0$
such that 
\[
u_{j_{2}}=\alpha u_{j_{1}}
\]
In particular, comparing the coefficients of $f_{{\scriptscriptstyle i_{1}-j_{2}+h}}^{{\scriptscriptstyle m}}$,
we obtain
\[
0\neq w_{i_{1}}\varepsilon_{i_{1}}^{j_{2}}=\alpha w_{i_{2}}\varepsilon_{i_{2}}^{j_{1}}
\]
for some $i_{2}$, where $i_{1}-j_{2}+h=i_{2}-j_{1}+h$. \\ i.e.
$i_{2}=i_{1}-(j_{2}-j_{1})<i_{1}$ and $w_{i_{2}}\neq0$, contradicting
the minimality of $i_{1}$. 
\end{proof}

\subsection{Proof of Lemma \ref{lem:2.18}.}

Some elementary computational lemmas are needed, the following one
follows by direct computations. Item (3) follows from the fact that
$\Phi_{{\scriptscriptstyle m,1,1}}$ is trace preserving.
\begin{lem}
\label{lem:4.5}Let $m\in\mathbb{N}\smallsetminus\{0\}$ then 
\begin{enumerate}
\item For $0\leq l\leq m-1$, we have
\[
{\scriptstyle {\textstyle \varepsilon_{{\scriptscriptstyle l}}^{{\scriptscriptstyle 0}}=\sqrt{\frac{l+1}{m+1}},\qquad\varepsilon_{{\scriptscriptstyle l}}^{{\scriptscriptstyle 1}}=-\sqrt{\frac{m-l}{m+1}}}}
\]
,and 
\[
(\varepsilon_{{\scriptscriptstyle l}}^{{\scriptscriptstyle 0}})^{2}+(\varepsilon_{{\scriptscriptstyle l}}^{{\scriptscriptstyle 1}})^{2}=1
\]

\item For $1\leq l\leq m-1$,
\[
{\textstyle (\varepsilon_{{\scriptscriptstyle l}}^{{\scriptscriptstyle 0}})^{2}=(\varepsilon_{{\scriptscriptstyle l-1}}^{{\scriptscriptstyle 0}})^{2}+\frac{1}{m+1}}
\]
,and 
\[
{\textstyle (\varepsilon_{{\scriptscriptstyle l-1}}^{{\scriptscriptstyle 1}})^{2}=(\varepsilon_{{\scriptscriptstyle l}}^{{\scriptscriptstyle 1}})^{2}+\frac{1}{m+1}}
\]
 
\item \textup{$\left\Vert u_{0}\right\Vert ^{2}+\left\Vert u_{1}\right\Vert ^{2}=1$.}
\end{enumerate}
\end{lem}

\begin{rem}
\label{Re:4.6} By Lemma \ref{lem:2.12} (2), the vectors $u_{0},u_{1}$
for $\Phi_{m,1,1}$ are given by
\[
u_{0}=\overset{{\scriptscriptstyle m}}{\underset{{\scriptscriptstyle l=1}}{\sum}}\varepsilon_{{\scriptscriptstyle l-1}}^{{\scriptscriptstyle 0}}w_{{\scriptscriptstyle l-1}}f_{l}^{{\scriptscriptstyle m}}\qquad and\quad u_{1}=\overset{{\scriptscriptstyle m-1}}{\underset{{\scriptscriptstyle l=0}}{\sum}}\varepsilon_{{\scriptscriptstyle l}}^{{\scriptscriptstyle 1}}w_{{\scriptscriptstyle l}}f_{l}^{{\scriptscriptstyle m}}
\]
\end{rem}
\begin{lem}
\label{lem:4.7} Let $m\in\mathbb{N}\smallsetminus\{0\}$. For a pure
state $ww^{*}\in End(P_{{\scriptscriptstyle m-1}})$, let $u_{0},u_{1}$
be the elements in $U_{{\scriptscriptstyle \Phi_{m,1,1},ww^{*}}}$
\textup{, and $R=\left\Vert u_{0}\right\Vert ^{2}\left\Vert u_{1}\right\Vert ^{2}-\left|\left\langle u_{0}\left|u_{1}\right.\right\rangle \right|^{2}$}.
The minimal value of $R$
\[
\frac{m}{(m+1)^{2}}
\]
\end{lem}
\begin{proof}
$\,$

By Remark \ref{Re:4.6}, we have 
\[
u_{0}=\overset{{\scriptscriptstyle m}}{\underset{{\scriptscriptstyle l=1}}{\sum}}\varepsilon_{{\scriptscriptstyle l-1}}^{{\scriptscriptstyle 0}}w_{{\scriptscriptstyle l-1}}f_{l}^{{\scriptscriptstyle m}}\;and\;u_{1}=\overset{{\scriptscriptstyle m-1}}{\underset{{\scriptscriptstyle l=0}}{\sum}}\varepsilon_{{\scriptscriptstyle l}}^{{\scriptscriptstyle 1}}w_{{\scriptscriptstyle l}}f_{l}^{{\scriptscriptstyle m}}
\]
So 

\[
{\textstyle \left\langle u_{0}\left|u_{1}\right.\right\rangle =\overset{{\scriptscriptstyle m-1}}{\underset{{\scriptscriptstyle l=1}}{\sum}}\varepsilon_{{\scriptscriptstyle l-1}}^{{\scriptscriptstyle 0}}\overline{w}_{{\scriptscriptstyle l-1}}\varepsilon_{{\scriptscriptstyle l}}^{{\scriptscriptstyle 1}}w_{{\scriptscriptstyle l}}=\overset{{\scriptscriptstyle m-1}}{\underset{{\scriptscriptstyle l=1}}{\sum}}\varepsilon_{{\scriptscriptstyle l-1}}^{{\scriptscriptstyle 0}}w_{{\scriptscriptstyle l}}\varepsilon_{{\scriptscriptstyle l}}^{{\scriptscriptstyle 1}}\overline{w}_{{\scriptscriptstyle l-1}}=\left\langle v_{0}\left|v_{1}\right.\right\rangle }
\]
 where
\[
v_{0}=\overset{{\scriptscriptstyle m-1}}{\underset{{\scriptscriptstyle l=1}}{\sum}}\varepsilon_{{\scriptscriptstyle l-1}}^{{\scriptscriptstyle 0}}w_{{\scriptscriptstyle l}}f_{l}^{{\scriptscriptstyle m}}\quad and\quad v_{1}=\overset{{\scriptscriptstyle m-1}}{\underset{{\scriptscriptstyle l=1}}{\sum}}\varepsilon_{{\scriptscriptstyle l}}^{{\scriptscriptstyle 1}}w_{{\scriptscriptstyle l-1}}f_{l}^{{\scriptscriptstyle m}}
\]

As ${\textstyle \left\Vert w\right\Vert }^{2}=\overset{{\scriptscriptstyle m-1}}{\underset{{\scriptscriptstyle l=0}}{\sum}}\left|w_{{\scriptscriptstyle l}}\right|^{2}=1$,
Using Lemma \ref{lem:4.5}, we obtain \\${\displaystyle \left\Vert v_{0}\right\Vert ^{2}=\overset{{\scriptscriptstyle m-1}}{\underset{{\scriptscriptstyle l=1}}{\sum}}\left(\varepsilon_{{\scriptscriptstyle l-1}}^{{\scriptscriptstyle 0}}\right)^{2}\left|w_{{\scriptscriptstyle l}}\right|^{2}}$
\[
=\overset{{\scriptscriptstyle m-1}}{\underset{{\scriptscriptstyle l=1}}{\sum}}\left(\varepsilon_{{\scriptscriptstyle l-1}}^{{\scriptscriptstyle 0}}\right)^{2}\left|w_{{\scriptscriptstyle l}}\right|^{2}+{\textstyle \frac{{\textstyle \left\Vert w\right\Vert }^{2}}{m+1}-\frac{{\textstyle \left\Vert w\right\Vert ^{2}}}{m+1}}\qquad\qquad
\]
\[
\quad\quad={\textstyle \frac{1}{m+1}\left|w_{{\scriptscriptstyle 0}}\right|^{2}+\overset{{\scriptscriptstyle m-1}}{\underset{{\scriptscriptstyle l=1}}{\sum}}\left(\left(\varepsilon_{{\scriptscriptstyle l-1}}^{{\scriptscriptstyle 0}}\right)^{2}+\frac{1}{m+1}\right)\left|w_{{\scriptscriptstyle l}}\right|^{2}-\frac{{\textstyle \left\Vert w\right\Vert }^{2}}{m+1}}
\]
\[
=\left(\varepsilon_{{\scriptscriptstyle 0}}^{{\scriptscriptstyle 0}}\right)^{2}\left|w_{{\scriptscriptstyle 0}}\right|^{2}+\overset{{\scriptscriptstyle m-1}}{\underset{{\scriptscriptstyle l=1}}{\sum}}{\textstyle \left(\varepsilon_{{\scriptscriptstyle l}}^{{\scriptscriptstyle 0}}\right)^{2}\left|w_{{\scriptscriptstyle l}}\right|^{2}-\frac{1}{m+1}}.\qquad\quad
\]
 Thus \\ ${\displaystyle \left\Vert v_{0}\right\Vert ^{2}=\overset{{\scriptscriptstyle m-1}}{\underset{{\scriptscriptstyle l=0}}{\sum}}\left(\varepsilon_{{\scriptscriptstyle l}}^{{\scriptscriptstyle 0}}\right)^{2}\left|w_{{\scriptscriptstyle l}}\right|^{2}-{\textstyle \frac{1}{m+1}}}$
\[
=\overset{{\scriptscriptstyle m}}{\underset{{\scriptscriptstyle l=1}}{\sum}}\left(\varepsilon_{{\scriptscriptstyle l-1}}^{{\scriptscriptstyle 0}}\right)^{2}\left|w_{{\scriptscriptstyle l-1}}\right|^{2}-\frac{1}{m+1}\qquad\qquad\qquad
\]
$\qquad\quad=\left\Vert u_{0}\right\Vert ^{2}-\frac{1}{m+1}$

Similarly$\qquad$ $\left\Vert v_{1}\right\Vert ^{2}=\left\Vert u_{1}\right\Vert ^{2}-\frac{1}{m+1}$.
So, \\

$\left|\left\langle u_{0}\left|u_{1}\right.\right\rangle \right|^{2}=\left|\left\langle v_{0}\left|v_{1}\right.\right\rangle \right|^{2}\leq\left\Vert v_{0}\right\Vert ^{2}\left\Vert v_{1}\right\Vert ^{2}$
\[
\qquad\qquad\qquad\qquad=\left\Vert u_{0}\right\Vert ^{2}\left\Vert u_{1}\right\Vert ^{2}-{\textstyle \frac{m}{\left(m+1\right)^{2}}}
\]

Thus
\[
R=\left\Vert u_{0}\right\Vert ^{2}\left\Vert u_{1}\right\Vert ^{2}-\left|\left\langle u_{0}\left|u_{1}\right.\right\rangle \right|^{2}\geq\frac{m}{\left(m+1\right)^{2}}
\]

and ${\displaystyle \frac{m}{\left(m+1\right)^{2}}}$ is a lower bound
for $R$. 

$\,$

For the minimal value of $R$, let $w=(1,0,\ldots,0)^{t}$ to get
\[
u_{0}=\sqrt{\frac{1}{m+1}}f_{{\scriptscriptstyle 0}}^{{\scriptscriptstyle m}},\quad u_{1}=\sqrt{\frac{m}{m+1}}f_{{\scriptscriptstyle 1}}^{{\scriptscriptstyle m}}
\]
 and $R=\frac{m}{\left(m+1\right)^{2}}$.
\end{proof}

\end{document}